\theoremstyle{plain}
\newtheorem{theorem}{Theorem}
\newtheorem{lemma}{Lemma}
\theoremstyle{remark}
\newtheorem{remark}{Remark}
\begin{document}

\title{Properties of Shannon and R\'{e}nyi entropies  of the
Poisson distribution as the functions of intensity parameter}

\author{Volodymyr Braiman\thanks{Taras Shevchenko National University of Kyiv, 64/13 Volodymyrska St., Kyiv, Ukraine, \texttt{volodymyr.braiman@knu.ua}.} \and Anatoliy Malyarenko\thanks{Mälardalen University, 721 23 Västerås, Sweden, \texttt{anatoliy.malyarenko@mdu.se}.} \and Yuliya Mishura\thanks{Taras Shevchenko National University of Kyiv, 64/13 Volodymyrska St., Kyiv, Ukraine and Mälardalen University, 721 23 Västerås, Sweden, \texttt{yuliyamishura@knu.ua}.} \and Yevheniia Anastasiia Rudyk\thanks{Taras Shevchenko National University of Kyiv, 64/13 Volodymyrska St., Kyiv, Ukraine, \texttt{rudykyao@knu.ua}.}}

\date{\today}

\maketitle

\begin{abstract}
We consider two types of entropy, namely, Shannon and R\'{e}nyi entropies  of the Poisson distribution, and establish their properties  as the functions of intensity parameter. More precisely, we prove that both entropies increase with intensity. While for Shannon entropy the proof is comparatively simple, for R\'{e}nyi entropy, which depends on additional parameter $\alpha>0$, we can characterize it as nontrivial. The proof is based on application of Karamata's inequality to the terms of Poisson distribution.

\textbf{Keywords:} Shannon entropy, R\'{e}nyi entropy, Poisson distribution, Karamata's inequality.
\end{abstract}

\section{Introduction}

The concept and formulas for different types of entropy, which come mainly from information theory, are now widely used, in many applications, in particular,  in detecting DDoS attacks, see, for example, \cite{alada,ddos}, in investigation of  structure of the neural codes,  \cite{paninski},  network traffic \cite{lall}, keyword extraction \cite{singhal}, stock market forecast modelling \cite{karaca} and many others.  Considering these applications, it is natural to assume that, for example,  the distribution of the number of DDoS attacks and of some other related phenomena  is Poisson with some fixed intensity, at least over some fixed time interval. Considering two types of entropy, namely, Shannon and R\'{e}nyi entropy, it is natural to assume that both entropies increase together with intensity parameter. To the best of our knowledge, this statement was not established rigorously. Moreover, if for Shannon entropy the proof is comparatively simple, for R\'{e}nyi entropy, which depends on additional parameter $\alpha>0$, we can characterize it as nontrivial. In order to get the proof, we apply Karamata's inequality to the terms of Poisson distribution. The verification of conditions of this inequality also is nontrivial. It involves rearrangement of the terms of Poisson distribution into a non-increasing sequence and exploring the monotonicity of partial sums of this sequence.

If to  take a historical look, the concept of entropy for a random variable was introduced by Shannon \cite{shan} to characterize the irreducible complexity inherent in a specific form of randomness. Then Shannon entropy was generalized by R\'{e}nyi entropy \cite{renyi} by introducing an additional parameter $\alpha>0$ that allows for a range of entropy measures. The presence of this parameter makes it difficult to accurately calculate the R\'{e}nyi entropy for various distributions and study its behavior. A pleasant exception is the normal distribution. For it, many types of entropies are calculated exactly, and it can be noted that these entropies increase along with the variance (see \cite{entr}). Furthermore, if the decrease of the R\'{e}nyi entropy with respect to the  parameter $\alpha$ is a well-known fact, its convexity for discrete distributions depends on the distribution, see \cite{bur}.

Taking into account the difficulty with exact computation of entropies, many attempts were devoted to the numerical  calculation  and approximation of entropies, see e.g., \cite{archa}--\cite{archer}, \cite{cher}, \cite{han}--\cite{jiao}, \cite{wu}, while the limit  of Shannon entropy of Poisson distribution  was calculated, e.g.,  in \cite{evans}.

As already mentioned, the purpose of this paper is to prove analytically the natural fact that the entropy of a Poisson distribution increases with the intensity of the distribution. We considered Shannon and R\'{e}nyi entropy. The paper is constructed as follows: in Section 2 we consider increase and convexity  of Shannon entropy of Poisson distribution as the function of intensity parameter. In Section 3 increase in intensity   of R\'{e}nyi entropy for any $\alpha>0$ is proved with the help of Karamata's inequality. Some by-product inequalities are obtained in Section 4, while auxiliary statements are postponed to Appendix.

\section{Analytical properties of Shannon entropy of Poisson distribution as the function of intensity parameter $\lambda$}

Consider  a discrete distribution $\{p_i,i\geq1\}.$ Its Shannon  entropy is defined as $$H_{S}(p_i,i\geq1)=-\sum_{i\geq1}p_i \log p_i.$$ In particular, we consider Poisson distribution with parameter $\lambda$:
\[
P\{\xi_{\lambda}=k\}=\frac{\lambda^k e^{-\lambda}}{k!}, k \in \mathbb{N} \cup \{0\}.
\]
Its Shannon entropy $H_S (\lambda)$ equals
\begin{equation}\label{entr1}
H_S (\lambda)=-\sum_{k=0}^{\infty}\frac{\lambda^k e^{-\lambda}}{k!} \log \left(\frac{\lambda^k e^{-\lambda}}{k!}\right)=-\lambda \log \left(\frac{\lambda}{e}\right)+e^{-\lambda} \sum_{k=2}^{\infty} \frac{\lambda^k \log k!}{k!}.
\end{equation}
It is natural to assume that Shannon entropy $H_S (\lambda)$ of Poisson distribution strictly increases with $\lambda \in (0,+\infty).$ We will prove this result, as well as the concavity property of $H_S (\lambda)$, in the next statement.

\begin{theorem} 
Shannon entropy  $H_S (\lambda), \lambda\in (0,+\infty)$ is strictly increasing and concave in $\lambda.$
\end{theorem}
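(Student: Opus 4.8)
The plan is to differentiate $H_S(\lambda)$ twice and to control the signs of the two derivatives through a matched pair of elementary logarithmic inequalities. Since the series $S(\lambda)=\sum_{k\ge 2}\lambda^{k}\log k!/k!$ appearing in \eqref{entr1} has coefficients decaying faster than any geometric rate, it defines an entire function; hence $H_S$ is smooth on $(0,+\infty)$ and may be differentiated term by term. Writing $\mathbb{E}$ for expectation with respect to $\xi_\lambda$ and reindexing the differentiated series, I expect to obtain the compact formula
\begin{equation*}
H_S'(\lambda)=\mathbb{E}\bigl[\log(\xi_\lambda+1)\bigr]-\log\lambda=\mathbb{E}\Bigl[\log\tfrac{\xi_\lambda+1}{\lambda}\Bigr].
\end{equation*}
This is precisely the Poisson difference identity $\frac{d}{d\lambda}\mathbb{E}[f(\xi_\lambda)]=\mathbb{E}[f(\xi_\lambda+1)-f(\xi_\lambda)]$, which follows from $p_k'(\lambda)=p_{k-1}(\lambda)-p_k(\lambda)$, applied to $f(k)=\log k!$, for which $f(k+1)-f(k)=\log(k+1)$.

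For strict monotonicity I would bound the integrand from below using $\log x\ge 1-1/x$ (valid for $x>0$), which gives pointwise $\log\frac{\xi_\lambda+1}{\lambda}\ge 1-\frac{\lambda}{\xi_\lambda+1}$. Taking expectations and inserting the closed form
\begin{equation*}
\mathbb{E}\Bigl[\frac{1}{\xi_\lambda+1}\Bigr]=e^{-\lambda}\sum_{k=0}^{\infty}\frac{\lambda^{k}}{(k+1)!}=\frac{1-e^{-\lambda}}{\lambda},
\end{equation*}
all dependence on $\lambda$ cancels and one is left with $H_S'(\lambda)\ge 1-(1-e^{-\lambda})=e^{-\lambda}>0$, proving strict increase on $(0,+\infty)$.

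For concavity I would differentiate once more by the same device, now applied to $h(k)=\log(k+1)$, for which $h(k+1)-h(k)=\log\bigl(1+\tfrac{1}{k+1}\bigr)$, yielding
\begin{equation*}
H_S''(\lambda)=\mathbb{E}\Bigl[\log\bigl(1+\tfrac{1}{\xi_\lambda+1}\bigr)\Bigr]-\frac{1}{\lambda}.
\end{equation*}
The complementary inequality $\log(1+x)<x$ for $x>0$ gives pointwise $\log\bigl(1+\tfrac{1}{\xi_\lambda+1}\bigr)<\frac{1}{\xi_\lambda+1}$, so the \emph{same} moment identity now produces $H_S''(\lambda)<\frac{1-e^{-\lambda}}{\lambda}-\frac{1}{\lambda}=-\frac{e^{-\lambda}}{\lambda}<0$, i.e.\ strict concavity.

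Most steps are routine, and genuine care is needed only in two places. First, in justifying the term-by-term differentiation and the reindexing that converts $S'(\lambda)-S(\lambda)$ into $\mathbb{E}[\log(\xi_\lambda+1)]$; this rests on the entireness of $S$ together with the telescoping $f(k+1)-f(k)=\log(k+1)$, so I anticipate no real difficulty. Second, and this is the conceptual heart of the argument, one must notice that the two one-sided bounds $1-1/x\le\log x\le x-1$ are exactly what the problem calls for, and that the single Poisson moment $\mathbb{E}[1/(\xi_\lambda+1)]=(1-e^{-\lambda})/\lambda$ simultaneously controls $H_S'$ from below and $H_S''$ from above. Once this pairing is spotted, both assertions follow at once, with the clean quantitative estimates $H_S'(\lambda)\ge e^{-\lambda}$ and $H_S''(\lambda)<-e^{-\lambda}/\lambda$.
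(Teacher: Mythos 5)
Your proof is correct, and its route to strict monotonicity is genuinely different from (and leaner than) the paper's. Your derivative formulas agree with the paper's: the compact expressions $H_S'(\lambda)=\mathbb{E}\bigl[\log\bigl((\xi_\lambda+1)/\lambda\bigr)\bigr]$ and $H_S''(\lambda)=\mathbb{E}\bigl[\log\bigl(1+\tfrac{1}{\xi_\lambda+1}\bigr)\bigr]-\tfrac{1}{\lambda}$ are exactly the paper's series identities \eqref{equ1} and the displayed computation of $H_S''$, rewritten probabilistically, and your concavity step is essentially theirs: both bound $\log\bigl(1+\tfrac{1}{k+1}\bigr)$ by $\tfrac{1}{k+1}$ and sum against Poisson weights. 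The divergence is in how positivity of $H_S'$ is obtained. The paper verifies $H_S'>0$ directly only on $(0,1]$, then uses $H_S''<0$ to conclude that $H_S'$ decreases, reducing the problem to $\liminf_{\lambda\to\infty}H_S'(\lambda)\ge0$ --- which costs the nontrivial asymptotic Lemma A.1 (splitting the series at $\left[\lambda/2\right]$, two-sided Stirling bounds, etc.). You instead apply the matched pointwise bound $\log x\ge 1-1/x$ and the closed-form moment $\mathbb{E}\bigl[\tfrac{1}{\xi_\lambda+1}\bigr]=\tfrac{1-e^{-\lambda}}{\lambda}$, obtaining the uniform quantitative estimate $H_S'(\lambda)\ge e^{-\lambda}>0$ in one line: no limit at infinity, no Lemma A.1, and monotonicity no longer depends logically on concavity, which cleans up the structure of the argument. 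As a bonus you get the sharper $H_S''(\lambda)<-e^{-\lambda}/\lambda$ where the paper settles for $H_S''<0$. The two points requiring care are fine as you handle them: term-by-term differentiation is legitimate because $\sum_{k\ge2}\lambda^k\log k!/k!$ defines an entire function (the paper makes the same move), and strictness of the increase already follows from the strictly positive lower bound $e^{-\lambda}$, so the equality case of $\log x\ge1-1/x$ never needs to be discussed.
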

 
\begin{proof} 
Obviously, we can differentiate the series \eqref{entr1} term by term for any $\lambda>0$, and get the equality
 \begin{equation}\begin{split}
 \label{equ1}
 H_S^{'} (\lambda)=&-\log\left(\frac{\lambda}{e}\right)-1-e^{-\lambda}\sum_{k=2}^{\infty} \frac{\lambda^k \log k!}{k!}+e^{-\lambda}\sum_{k=2}^{\infty} \frac{\lambda^{k-1} \log k!}{(k-1)!}\ \\
 =&-\log \lambda+e^{-\lambda}\sum_{k=1}^{\infty} \frac{\lambda^k \log (k+1)!}{k!}-e^{-\lambda}\sum_{k=2}^{\infty} \frac{\lambda^k \log k!}{k!}\ \\
 =&-\log \lambda+e^{-\lambda}\sum_{k=1}^{\infty} \frac{\lambda^k \log (k+1)}{k!}.\
 \end{split}
\end{equation}
 It is clear that both terms in the right-hand side of \eqref{equ1} are non-negative for for $\lambda \in(0,1],$  and the second one is strictly positive, therefore $H_S^{'} (\lambda)>0$ for $\lambda \in(0,1].$ So, it is necessary to prove that $H_S^{'} (\lambda)>0$ for $\lambda >1.$ Let's calculate
  \begin{align*}
 &H_S^{''}(\lambda)=-\frac{1}{\lambda}-e^{-\lambda}\sum_{k=1}^{\infty} \frac{\lambda^k \log (k+1)}{k!}+e^{-\lambda}\sum_{k=1}^{\infty} \frac{\lambda^{k-1} \log (k+1)}{(k-1)!}\ \\
 &=-\frac{1}{\lambda}+e^{-\lambda}\sum_{k=0}^{\infty} \frac{\lambda^k \log (k+2)}{k!}-e^{-\lambda}\sum_{k=1}^{\infty} \frac{\lambda^k \log (k+1)}{k!}\\ \
 &=-\frac{1}{\lambda}+e^{-\lambda} \log 2+e^{-\lambda}\sum_{k=1}^{\infty} \frac{\lambda^k \log (1+\frac{1}{k+1})}{k!}\\ \
 &=-\frac{1}{\lambda}+ e^{-\lambda}\sum_{k=0}^{\infty} \frac{\lambda^k \log (1+\frac{1}{k+1})}{k!}\\ \
 &<-\frac{1}{\lambda}+e^{-\lambda} \sum_{k=0}^{\infty} \frac{\lambda^k}{(k+1)!}<-\frac{1}{\lambda}+e^{-\lambda} \frac{1}{\lambda}\sum_{k=0}^{\infty} \frac{\lambda^{k+1}}{(k+1)!}\\ \
 &<-\frac{1}{\lambda}+e^{-\lambda} \frac{1}{\lambda}e^{\lambda}=0.\
 \end{align*}
So, $H_S^{''}(\lambda)<0$ for all $\lambda>0.$ Therefore, $H_S^{'} (\lambda)$ strictly decreases in $\lambda$ and it is sufficient to prove that $$\lim_{\lambda \to \infty} H_S^{'}(\lambda)\geq 0.$$
However,
\[
\lim_{\lambda \to \infty} H_S^{'}(\lambda)=\lim_{\lambda \to \infty} \log \lambda \left(e^{-\lambda} (\log \lambda)^{-1} \sum_{k=1}^{\infty} \frac{\lambda^k \log(k+1)}{k!} -1\right),
\]
and it is sufficient to establish that
\[
{\liminf}_{\lambda \to \infty} e^{-\lambda} (\log \lambda)^{-1} \sum_{k=1}^{\infty} \frac{\lambda^k \log(k+1)}{k!} \geq 1.
\]
This inequality   is proved in Lemma A.1.
Finally, we get that $H_S^{'} (\lambda)>0$ for all $\lambda\geq 0$ and $H_S^{''} (\lambda)<0$  for all $\lambda\geq 0,$ whence the proof follows.
\end{proof}

\section{Analytical properties of  R\'{e}nyi entropy of Poisson distribution as the function of intensity parameter $\lambda$}

Again, consider a discrete distribution $\{p_i,i\geq1\}.$ Its R\'{e}nyi entropy is defined as
\[
H_{R}^{\alpha}(p_i,i\geq1)=\frac{1}{1-\alpha} \log
\left(\sum_{i\geq1} p_i^{\alpha}\right),~\alpha>0, ~\alpha\neq1,
\]
and
\[
H_{R}^{\alpha}(p_i,i\geq1)\rightarrow H_{S}(p_i,i\geq1),
\]
as  $ \alpha \rightarrow 1$,  where $H_{S}(p_i,i\geq1)$ is a Shannon entropy.
In the case of Poisson distribution
\begin{equation}\label{rpoi}
H_{R}^{\alpha} (\lambda)=\frac{1}{1-\alpha} \log \left(e^{-\alpha\lambda} \sum_{k=0}^{\infty} \frac{\lambda^{k\alpha}}{(k!)^{\alpha}} \right),~\alpha>0, ~\alpha\neq1.
\end{equation}
As for the Shannon entropy, our goal is to investigate the behaviour of   R\'{e}nyi entropy of Poisson distribution as the function of intensity $\lambda.$ To be more precise, we wish to prove that for any fixed $\alpha>0,$ $\alpha\ne1,$ R\'{e}nyi entropy of Poisson distribution increases in $\lambda$. Let us take into account equality \eqref{rpoi} and consider two cases.
\begin{itemize}
 \item Let $\alpha\in(0,1).$ Then $\frac{1}{1-\alpha}>0$ and, since logarithm is strictly increasing, for $H_{R}^{\alpha} (\lambda)$ to increase in $\lambda$ function
\begin{equation}\label{psi}
\psi(\alpha,\lambda)= e^{-\alpha \lambda}\sum_{k=0}^{\infty} \left(\frac{\lambda^k}{k!} \right)^{\alpha}
\end{equation}
should increase in $\lambda\in(0, +\infty)$.
\item Let $\alpha>1.$ Then $\frac{1}{1-\alpha}<0$
and for $H_{R}^{\alpha} (\lambda)$ to increase in $\lambda$ functions
$\log(\psi(\alpha,\lambda))$ and $\psi(\alpha,\lambda)$ should decrease in $\lambda\in(0, +\infty)$.
\end{itemize}

In Theorem \ref{theor2} below we will establish desired character of monotonicity of $\psi(\alpha,\lambda)$ in $\lambda\in(0, +\infty)$ for both cases $0<\alpha<1$ and $\alpha>1.$

To prove this result, we will apply Karamata's inequality (see Lemma A.2 in Appendix). This inequality deals with non-increasing sequences, so it suggests  to rearrange the terms of the Poisson distribution in non-increasing order and study the properties of the resulting sequence.

Let, as before, $p_k(\lambda)=P\{\xi_{\lambda}=k\}=\frac{\lambda^k}{k!}e^{-\lambda}, k \in \mathbb{N} \cup \{0\}.$
Denote by $\{q_k(\lambda): k\ge0\}$ the terms of the sequence $\{p_k(\lambda): k\ge0\}$ which are rearranged in non-increasing order. Note that the sequence
$\{p_k(\lambda): k\ge0\}$ may contain equal terms. In this case the sequence
$\{q_k(\lambda): k\ge0\}$ will contain equal terms as well.

\begin{lemma}\label{majorization} 
For each $n\ge0$ the function $S_n(\lambda)=\sum\limits_{k=0}^n q_k(\lambda)$ strictly decreases on $(0,+\infty).$
\end{lemma}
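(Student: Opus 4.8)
The plan is to exploit the unimodality of $k\mapsto p_k(\lambda)$ together with a simple identity for the $\lambda$-derivative of the Poisson weights, turning $S_n(\lambda)$ into a telescoping sum with a sign-definite derivative. For fixed $\lambda$ the ratio $p_{k+1}(\lambda)/p_k(\lambda)=\lambda/(k+1)$ exceeds $1$ exactly when $k+1<\lambda$, so the sequence is nondecreasing and then nonincreasing, i.e.\ unimodal with mode near $\lfloor\lambda\rfloor$. For such a sequence the indices of the $n+1$ largest entries form a block of consecutive integers: if $i<k<j$ with $i,j$ in the top set but $k$ outside, unimodality gives $p_k(\lambda)\ge\min(p_i(\lambda),p_j(\lambda))$, contradicting that everything outside the top set is dominated by it. Hence $S_n(\lambda)=\sum_{k=a}^{b}p_k(\lambda)$ for consecutive indices $a=a(\lambda)\le b=b(\lambda)$ with $b-a=n$; equivalently $S_n(\lambda)=\max_{|B|=n+1}\sum_{k\in B}p_k(\lambda)$.

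Differentiating $\lambda^k e^{-\lambda}/k!$ gives $p_k'(\lambda)=p_{k-1}(\lambda)-p_k(\lambda)$ for $k\ge1$ and $p_0'(\lambda)=-p_0(\lambda)$, which becomes uniform under the convention $p_{-1}\equiv0$. On any $\lambda$-interval where the block $\{a,\dots,b\}$ is constant, $S_n$ is smooth and the sum telescopes:
\[
S_n'(\lambda)=\sum_{k=a}^{b}\big(p_{k-1}(\lambda)-p_k(\lambda)\big)=p_{a-1}(\lambda)-p_b(\lambda).
\]
Because $b$ lies in the top-$(n+1)$ set while $a-1$ does not, we have $p_{a-1}(\lambda)\le\min_{k\in B}p_k(\lambda)\le p_b(\lambda)$; since the weights are pairwise distinct for all but countably many $\lambda$, this inequality is strict, and the edge case $a=0$ yields $S_n'(\lambda)=-p_b(\lambda)<0$ directly. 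Thus $S_n'(\lambda)<0$ on each such interval.

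To conclude on all of $(0,+\infty)$, let $T$ be the set of $\lambda$ at which two distinct weights coincide; each equation $p_j(\lambda)=p_k(\lambda)$ reduces to $\lambda^{k-j}=k!/j!$, has a single positive root, and only finitely many such roots lie in any bounded interval, so $T$ is discrete. The function $S_n$, being the sum of the $n+1$ largest among continuous functions, is continuous on $(0,+\infty)$; the block is locally constant off $T$, where $S_n'<0$. A continuous function that is differentiable with strictly negative derivative off a discrete set is strictly decreasing, which gives the claim.

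The main obstacle I anticipate is the bookkeeping of the rearrangement: rigorously establishing that the top-$(n+1)$ indices form a consecutive block, that $S_n$ remains continuous as this block jumps across the crossing values in $T$, and that the boundary inequality $p_{a-1}(\lambda)<p_b(\lambda)$ is genuinely strict (handling the tie values and the edge case $a=0$). Once this combinatorial structure is pinned down, the telescoping identity reduces the entire monotonicity statement to that single transparent inequality.
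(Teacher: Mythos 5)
Your proposal is correct and takes essentially the same route as the paper: both use unimodality of $k\mapsto p_k(\lambda)$ to show the $n+1$ largest weights occupy consecutive indices $a,\dots,b$, then differentiate the resulting consecutive block so the sum telescopes to $S_n'(\lambda)=p_{a-1}(\lambda)-p_b(\lambda)<0$, and finally glue the piecewise conclusion over all of $(0,+\infty)$. The only difference is bookkeeping: the paper computes the crossing thresholds $c_m=\bigl((m+1)\cdots(m+n+1)\bigr)^{\frac{1}{n+1}}$ explicitly and verifies the sign of $S_n'$ on each closed interval $[c_{m-1},c_m]$ directly, whereas you argue abstractly that the optimal block is locally constant off a discrete tie set and invoke continuity of $S_n$ plus the standard negative-derivative-off-a-discrete-set lemma --- both versions are sound.
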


\begin{proof} Fix $n\ge0.$

Firstly, we will establish that for every $\lambda>0$ the number $S_n(\lambda)$ is equal to the sum of some $n+1$ consecutive terms of the initial sequence
$\{p_k(\lambda): k\ge0\}.$ Note that $\frac{p_{k+1}(\lambda)}{p_k(\lambda)}=\frac{\lambda}{k+1},$ thus $p_k(\lambda)<p_{k+1}(\lambda)$ for $k<\lambda-1$
and $p_k(\lambda)>p_{k+1}(\lambda)$ for $k>\lambda-1.$ Therefore for every
$0\le i_1<i_2<i_3$ we have $p_{i_2}(\lambda)>\min\{p_{i_1}(\lambda),p_{i_3}(\lambda)\}.$ It follows that $q_0(\lambda),\ldots,q_n(\lambda)$ are $n+1$ consecutive terms of the sequence $\{p_k(\lambda): k\ge0\}.$ Thus
$\bigl(q_0(\lambda),\ldots,q_n(\lambda)\bigr)$ is a permutation of terms
$\bigl(p_{\ell}(\lambda),\ldots,p_{\ell+n}(\lambda)\bigr)$ for some
$\ell=\ell(\lambda)\ge0,$ so $S_n(\lambda)=\sum\limits_{k=0}^n q_k(\lambda)=\sum\limits_{k=\ell}^{\ell+n} p_k(\lambda).$

Let us determine $\ell(\lambda).$ Put $s_n(m,\lambda)=\sum\limits_{k=m}^{m+n} p_k(\lambda),$ $m\ge0.$ Clearly
\[
S_n(\lambda)=s_n(\ell,\lambda)=\max\limits_{m\ge0}s_n(m,\lambda).
\]
Note that
\[
s_n(m+1,\lambda)-s_n(m,\lambda)=\sum\limits_{k=m+1}^{m+n+1}p_k(\lambda)-\sum\limits_{k=m}^{m+n}p_k(\lambda)=p_{m+n+1}(\lambda)-p_{m}(\lambda)<0
\]
if and only if 
\[
\frac{p_{m+n+1}(\lambda)}{p_m(\lambda)}=\frac{\lambda^{n+1}m!}{(m+n+1)!}=\frac{\lambda^{n+1}}{(m+1)\ldots(m+n+1)}<1,
\]
i.e., $\lambda<c_m=\bigl((m+1)\ldots(m+n+1)\bigr)^{\frac{1}{n+1}}.$
It follows that $s_n(m+1,\lambda)>s_n(m,\lambda)$ for $\lambda>c_m$ and $s_n(m+1,\lambda)<s_n(m,\lambda)$ for $\lambda<c_m.$ Therefore
$\max\limits_{m\ge0}s_n(m,\lambda)$ is achieved at
\[
\ell(\lambda)=
\begin{cases}
0 & \text{for $0<\lambda\le c_0$},\\ m &\text{for $c_{m-1}\le\lambda\le c_m,$ $m\ge1$}
\end{cases}
\]
(in particular, in case $\lambda=c_k$ we have
$s_n(k+1,\lambda)=s_n(k,\lambda),$ so one can take either $\ell(\lambda)=k$ or $\ell(\lambda)=k+1$).

For $0<\lambda\le c_0$ we have $S_n(\lambda)=\sum\limits_{k=0}^n p_k(\lambda)=\sum\limits_{k=0}^n \frac{\lambda^k}{k!}e^{-\lambda},$
so 
\begin{align*}
S'_n(\lambda)=&\sum\limits_{k=1}^n \frac{\lambda^{k-1}}{(k-1)!}e^{-\lambda}-\sum\limits_{k=0}^n\frac{\lambda^k}{k!}e^{-\lambda}
\\=&\sum\limits_{k=0}^{n-1}\frac{\lambda^{k}}{k!}e^{-\lambda}-\sum\limits_{k=0}^n\frac{\lambda^k}{k!}e^{-\lambda}=-\frac{\lambda^n}{n!}e^{-\lambda}<0
\end{align*}
and the function $S_n(\lambda)$ strictly decreases on $(0,c_0].$

For $c_{m-1}<\lambda\le c_m,$ $m\ge1,$ we have $S_n(\lambda)=\sum\limits_{k=m}^{m+n} p_k(\lambda)=\sum\limits_{k=m}^{m+n} \frac{\lambda^k}{k!}e^{-\lambda},$
so 
\begin{align*}
S'_n(\lambda)=&\sum\limits_{k=m}^{m+n} \frac{\lambda^{k-1}}{(k-1)!}e^{-\lambda}-\sum\limits_{k=m}^{m+n}\frac{\lambda^k}{k!}e^{-\lambda}
\\=&\sum\limits_{k=m-1}^{m+n-1} \frac{\lambda^k}{k!}e^{-\lambda}-\sum\limits_{k=m}^{m+n}\frac{\lambda^k}{k!}e^{-\lambda}=
\frac{\lambda^{m-1}}{(m-1)!}e^{-\lambda}-\frac{\lambda^{m+n}}{(m+n)!}e^{-\lambda}<0,
\end{align*}
which is equivalent to $\lambda^{n+1}>m(m+1)\ldots(m+n),$ or $\lambda>c_{m-1}.$ Hence the function $S_n(\lambda)$ also strictly decreases on $[c_{m-1},c_m]$ for each $m\ge1.$

Clearly $c_m\to+\infty$ as $m\to\infty.$ Thus $S_n(\lambda)$ decreases on \[(0,c_0]\cup \bigcup\limits_{m=1}^\infty[c_{m-1},c_m]=(0,+\infty).\]
\end{proof}

Recall that $\psi(\alpha,\lambda)=\sum\limits_{k=0}^\infty p_k^\alpha(\lambda),$
$\alpha>0,$ $\lambda>0.$

\begin{theorem} \label{theor2}
For every $0<\alpha<1$ the function $\psi(\alpha,\lambda)$ strictly increases as a function of $\lambda$ on $(0,+\infty),$
while for every $\alpha>1$ the function $\psi(\alpha,\lambda)$ strictly decreases as a function of $\lambda$ on $(0,+\infty).$
\end{theorem}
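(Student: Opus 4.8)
The plan is to recognise $\psi(\alpha,\lambda)$ as a sum of the form $\sum_k f\bigl(q_k(\lambda)\bigr)$ and then to apply Karamata's inequality, using Lemma~\ref{majorization} to supply the majorization between the rearranged sequences at two different values of the intensity. Since rearranging a series of nonnegative terms does not change its sum, we have $\psi(\alpha,\lambda)=\sum_{k=0}^\infty p_k^\alpha(\lambda)=\sum_{k=0}^\infty q_k^\alpha(\lambda),$ so the whole problem can be phrased in terms of the non-increasing sequence $\{q_k(\lambda):k\ge0\}$ with the choice $f(x)=x^\alpha.$

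First I would fix $0<\lambda_1<\lambda_2$ and compare the two sequences $\{q_k(\lambda_1)\}$ and $\{q_k(\lambda_2)\}.$ Both are non-increasing and both sum to $1,$ being rearrangements of probability distributions. By Lemma~\ref{majorization}, each partial sum $S_n(\lambda)$ strictly decreases in $\lambda,$ hence $\sum_{k=0}^n q_k(\lambda_1)=S_n(\lambda_1)>S_n(\lambda_2)=\sum_{k=0}^n q_k(\lambda_2)$ for every $n\ge0.$ Together with the equality of the total masses, this is precisely the assertion that $\{q_k(\lambda_1)\}$ majorizes $\{q_k(\lambda_2)\}.$ Intuitively, smaller $\lambda$ corresponds to a more sharply peaked Poisson distribution, which is the more majorizing one.

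Next I would invoke Karamata's inequality (Lemma~A.2). The function $f(x)=x^\alpha$ is strictly convex on $(0,+\infty)$ when $\alpha>1$ and strictly concave when $0<\alpha<1.$ Applying Karamata to the majorization just established yields $\sum_k q_k^\alpha(\lambda_1)\ge\sum_k q_k^\alpha(\lambda_2)$ in the convex case and the reverse inequality in the concave case, that is, $\psi(\alpha,\lambda_1)\ge\psi(\alpha,\lambda_2)$ for $\alpha>1$ and $\psi(\alpha,\lambda_1)\le\psi(\alpha,\lambda_2)$ for $0<\alpha<1.$ Since $\lambda_1<\lambda_2$ were arbitrary, this gives exactly the two monotonicity statements of the theorem.

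The main obstacle I anticipate is twofold: upgrading these inequalities to the strict ones claimed, and justifying the passage from Karamata's classical finite formulation to the present infinite sums. For strictness I would exploit that $S_n(\lambda_1)>S_n(\lambda_2)$ holds strictly for every $n,$ so the two sequences are genuinely distinct and not permutations of one another; combined with the strict convexity (respectively concavity) of $x^\alpha,$ this forces a strict inequality. For the infinite-sum issue I would either appeal directly to the form of Karamata's inequality in Lemma~A.2, if it already covers summable non-increasing sequences, or else reduce to finite truncations and pass to the limit, taking care that the strict gap obtained at a fixed index survives the limiting procedure. Checking that all the series involved converge for every $\alpha>0$ is routine, since $q_k(\lambda)\le1$ and $\psi(\alpha,\lambda)$ is itself finite by the ratio test applied to $\bigl(\lambda^k/k!\bigr)^\alpha.$
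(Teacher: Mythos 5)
Your overall route is exactly the paper's: rearrange the Poisson weights into the non-increasing sequence $\{q_k(\lambda)\}$, extract the majorization from Lemma \ref{majorization}, and apply Karamata's inequality with $f(x)=x^\alpha$. However, the two obstacles you flag at the end are precisely where the real work lies, and your proposal does not contain the ideas that close them. First, the truncation. Lemma A.2 is the \emph{finite} Karamata inequality and requires equal sums (its condition 3); the truncated vectors $\bigl(q_0(\lambda_i),\ldots,q_n(\lambda_i)\bigr)$ have sums $S_n(\lambda_1)>S_n(\lambda_2)$, which are never equal, so ``reduce to finite truncations and pass to the limit'' fails as stated. The paper's fix is to append the tail $r_n(\lambda_i)=\sum_{k=n+1}^\infty q_k(\lambda_i)$ as a final coordinate, making both total sums equal to $1$; but then the augmented vector must still be non-increasing, i.e.\ one must verify $r_n(\lambda)\le q_n(\lambda)$. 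This holds for $n\ge N(\lambda)$ with $N(\lambda)\ge 2\lambda$, because beyond that point $q_k(\lambda)=p_k(\lambda)$ and $p_{k+1}(\lambda)/p_k(\lambda)=\lambda/(k+1)<\tfrac12$, so the tail is dominated by a geometric series and $r_n(\lambda)\le p_n(\lambda)=q_n(\lambda)$. That verification is absent from your sketch and is not routine; without it the appended vectors do not satisfy condition 1 of Lemma A.2 and Karamata cannot be applied at all.

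Second, strictness. Lemma A.2 as stated gives only the weak inequality, so ``strict convexity forces a strict inequality'' has no license in the toolkit you invoke; and even if you import a strict finite version of Karamata (distinct sorted vectors give strict inequality), the strict inequalities obtained at each finite $n$ can degrade to $\le$ in the limit $n\to\infty$ --- you acknowledge this danger yourself but supply no mechanism to prevent it. The paper manufactures an $n$-independent gap: pick $m$ with $q_m(\lambda_1)>q_{m+1}(\lambda_1)$, replace these two entries by $\widetilde{q}_m(\lambda_1)=q_m(\lambda_1)-\delta$ and $\widetilde{q}_{m+1}(\lambda_1)=q_{m+1}(\lambda_1)+\delta$, with $\delta>0$ small enough that the ordering and the strict partial-sum slack from Lemma \ref{majorization} both survive; weak Karamata then applies to the perturbed family, and strict concavity (or convexity) of $x^\alpha$ applied to just those two entries yields a fixed positive gap $\widetilde{q}_m^\alpha(\lambda_1)+\widetilde{q}_{m+1}^\alpha(\lambda_1)-q_m^\alpha(\lambda_1)-q_{m+1}^\alpha(\lambda_1)$ that is independent of $n$ and hence persists after the limit. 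Without some device of this kind, your argument as written proves only the non-strict monotonicity of $\psi(\alpha,\cdot)$, not the strict monotonicity claimed in the theorem.
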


\begin{proof}
Put $r_n(\lambda)=\sum\limits_{k=n+1}^\infty q_k(\lambda),$ $n\ge0.$ Note that
$r_n(\lambda)\to0$ as $n\to\infty$ as a remainder of a convergent series.
It follows from the proof of Lemma \ref{majorization} that for every
$\lambda>0$ there exists $N(\lambda)\in\mathbb{N}$ such that
$\bigl(q_0(\lambda),\ldots,q_n(\lambda)\bigr)$ is a permutation of $\bigl(p_0(\lambda),\ldots,p_n(\lambda)\bigr)$ for each $n\ge N(\lambda).$ Hence
$q_n(\lambda)=p_n(\lambda)$ for every $n\ge N(\lambda).$ Without loss of generality $N(\lambda)\ge2\lambda.$ Then $\frac{p_{k+1}(\lambda)}{p_k(\lambda)}=\frac{\lambda}{k+1}<\frac{1}{2}$ for all $k\ge N(\lambda).$
Thus \[r_n(\lambda)=\sum\limits_{k=n+1}^\infty p_k(\lambda)\le\sum\limits_{k=n+1}^\infty \frac{p_n(\lambda)}{2^{k-n}}=p_n(\lambda)= q_n(\lambda),\; n\ge N(\lambda).\]

Fix any $0<\lambda_1<\lambda_2.$ For every $n\ge N=\max\{N(\lambda_1),N(\lambda_2)\}$ we have
\begin{enumerate}[label=\arabic*)]
\item $q_0(\lambda_i)\ge q_1(\lambda_i)\ge \ldots \ge q_n(\lambda_i)\ge r_n(\lambda_i),\; i=1,2,$ \label{item1}
\item $\sum\limits_{k=0}^i q_k(\lambda_1)>\sum\limits_{k=0}^i q_k(\lambda_2)\;\text{for}\; 0\le i\le n,$ by Lemma \ref{majorization},\label{item2}
\item $\sum\limits_{k=0}^n q_k(\lambda_1)+r_n(\lambda_1)=\sum\limits_{k=0}^n q_k(\lambda_2)+r_n(\lambda_2)=1,$ \label{item3}
\end{enumerate}
i.e. $\bigl(q_0(\lambda_1),\ldots,q_n(\lambda_1),r_n(\lambda_1)\bigr)$ majorizes
$\bigl(q_0(\lambda_2),\ldots,q_n(\lambda_2),r_n(\lambda_2)\bigr)$
(see Lemma A.2 for the definition of majorization).

For $0<\alpha<1$ the function $x^\alpha,$ $x\in [0,1],$ is concave, thus by Karamata's inequality
\begin{equation}\label{ineqpsi1}
\sum\limits_{k=0}^n q_k^\alpha(\lambda_1)+r_n^\alpha(\lambda_1)\le \sum\limits_{k=0}^n q_k^\alpha(\lambda_2)+r_n^\alpha(\lambda_2),\; n\ge N.\end{equation}
Passing to the limit as $n\to\infty,$ we obtain
\begin{equation}\label{ineqpsi2}
\psi(\alpha,\lambda_1)=\sum\limits_{k=0}^\infty q_k^\alpha(\lambda_1)\le \sum\limits_{k=0}^\infty q_k^\alpha(\lambda_2)=\psi(\alpha,\lambda_2),\end{equation}
therefore the function $\psi(\alpha,\lambda)$ increases as a function of $\lambda$ on $(0,+\infty).$

Now we will strengthen inequalities \eqref{ineqpsi1} and \eqref{ineqpsi2} to ensure that $\psi(\alpha,\lambda)$ is strictly monotonic as a function of $\lambda.$

Fix $m$ such that $q_m(\lambda_1)>q_{m+1}(\lambda_1)$ (actually one can take $m=0$ or $m=1$ because the sequence $\{q_k(\lambda_1):k\ge0\}$ contains at most two copies of each term). Put $\widetilde{q}_m(\lambda_1)=q_m(\lambda_1)-\delta,$ $\widetilde{q}_{m+1}(\lambda_1)=q_{m+1}(\lambda_1)+\delta,$ where
$\delta>0$ is such that \[\delta\le\frac{1}{2}\bigl(q_m(\lambda_1)-q_{m+1}(\lambda_1)\bigr)\; \text{and}\; \delta<\sum\limits_{k=0}^m q_k(\lambda_1)-\sum\limits_{k=0}^m q_k(\lambda_2).\]

Note that for every $n\ge \widetilde{N}=\max\{N(\lambda_1),N(\lambda_2),m+1\}$
conditions \ref{item1}--\ref{item3} above remain valid if $q_m(\lambda_1),q_{m+1}(\lambda_1)$ are replaced with $\widetilde{q}_m(\lambda_1),\widetilde{q}_{m+1}(\lambda_1).$ Indeed, only inequalities $\widetilde{q}_m(\lambda_1)\ge\widetilde{q}_{m+1}(\lambda_1)$ in \ref{item1} and $\sum\limits_{k=0}^{m-1} q_k(\lambda_1)+\widetilde{q}_m(\lambda_1)>\sum\limits_{k=0}^m q_k(\lambda_2)$ in \ref{item2} are to be verified anew,
and both of them hold by the choice of $\delta.$ It means that \[\bigl(q_0(\lambda_1),\ldots,q_{m-1}(\lambda_1),\widetilde{q}_m(\lambda_1),\widetilde{q}_{m+1}(\lambda_1),q_{m+2}(\lambda_1),\ldots,q_n(\lambda_1),r_n(\lambda_1)\bigr)\] majorizes
$\bigl(q_0(\lambda_2),\ldots,q_n(\lambda_2),r_n(\lambda_2)\bigr).$
Therefore inequalities \eqref{ineqpsi1} and \eqref{ineqpsi2} remain valid if $q_m(\lambda_1),q_{m+1}(\lambda_1)$ are replaced with $\widetilde{q}_m(\lambda_1),\widetilde{q}_{m+1}(\lambda_1).$
Thus
\[
\sum\limits_{k=0}^\infty q_k^\alpha(\lambda_1)+\widetilde{q}_m^\alpha(\lambda_1)+\widetilde{q}_{m+1}^\alpha(\lambda_1)-q_m^\alpha(\lambda_1)-q_{m+1}^\alpha(\lambda_1)\le \sum\limits_{k=0}^\infty q_k^\alpha(\lambda_2).\]
Since the function $x^\alpha,$ $x\in [0,1],$ is strictly concave on $[0,1],$ we have \[\widetilde{q}_m^\alpha(\lambda_1)+\widetilde{q}_{m+1}^\alpha(\lambda_1)>q_m^\alpha(\lambda_1)+q_{m+1}^\alpha(\lambda_1).\] Hence
\begin{equation}\label{ineqpsi3}
\psi(\alpha,\lambda_1)=\sum\limits_{k=0}^\infty q_k^\alpha(\lambda_1)< \sum\limits_{k=0}^\infty q_k^\alpha(\lambda_2)=\psi(\alpha,\lambda_2).\end{equation}

For $\alpha>1$ the function $x^\alpha,$ $x\in [0,1],$ is convex, thus by Karamata's inequality inequalities \eqref{ineqpsi1}, \eqref{ineqpsi2} and \eqref{ineqpsi3} are reversed.
\end{proof}

\section{By-product inequalities}

As can be seen from Section 3, the proof of increasing (decreasing) properties of function $\psi(\alpha,\lambda)$ did not use   differentiation of  $\psi(\alpha,\lambda)$ and the properties of its derivatives. However, function $\psi(\alpha,\lambda)$  can be differentiated term by term at any point $\alpha>0$ in $\lambda>0$ and as a result, we shall get after some elementary transformations that for any $\alpha, \lambda>0$
\[ 
\psi'_\lambda(\alpha,\lambda)=\alpha e^{-\alpha \lambda}\sum_{k=0}^{\infty} (k-\lambda) \frac{\lambda^{ \alpha k-1}}{(k!)^{\alpha}}.
\]
Obviously, if $\alpha=1$ then for any $\lambda>0$ the derivative is zero because it differs by a strictly positive multiplier $\alpha e^{-\alpha \lambda}$ from the expression 
\[
R(\alpha,\lambda)=\sum_{k=0}^{\infty} (k-\lambda) \frac{\lambda^{ \alpha k-1}}{(k!)^{\alpha}}, 
\]
and  
\[R(1,\lambda)=\sum_{k=1}^{\infty}\frac{\lambda^{  k-1}}{(k-1)!  }-\sum_{k=0}^{\infty}\frac{\lambda^{  k}}{k!  }=0.
\]
However, taking into account Theorem \ref{theor2}, we can establish some nontrivial inequalities.
\begin{lemma}\label{lem2}
For any $\lambda>0$ and $\alpha \in (0,1)$
\begin{equation}\label{equi-1}
\sum_{k=1}^{\infty}\frac{{\lambda}^{\alpha k-1} k^{1-\alpha}}{((k-1)!)^{\alpha}}\ge \sum_{k=0}^{\infty}\frac{{\lambda}^{\alpha k}}{(k!)^{\alpha}},\end{equation}
while for any $\lambda>0$ and $\alpha>1$
\begin{equation}\label{equi-2}
\sum_{k=1}^{\infty}\frac{{\lambda}^{\alpha k-1} k^{1-\alpha}}{((k-1)!)^{\alpha}}\le\sum_{k=0}^{\infty}\frac{{\lambda}^{\alpha k}}{(k!)^{\alpha}}.
\end{equation} 
\end{lemma}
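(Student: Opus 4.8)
The plan is to read off both inequalities directly from the sign of the derivative $\psi'_\lambda(\alpha,\lambda)$, whose monotonicity behaviour has already been settled in Theorem \ref{theor2}. The key observation is that the two sides of \eqref{equi-1} and \eqref{equi-2} are nothing but a rewriting of the quantity $R(\alpha,\lambda)$ introduced above, so that no fresh analytic work is required beyond what has been done.

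First I would record that, by the term-by-term differentiation quoted at the start of this section, $\psi'_\lambda(\alpha,\lambda)=\alpha e^{-\alpha\lambda}R(\alpha,\lambda)$, and that the prefactor $\alpha e^{-\alpha\lambda}$ is strictly positive for all $\alpha,\lambda>0$. Hence $R(\alpha,\lambda)$ has the same sign as $\psi'_\lambda(\alpha,\lambda)$. Since $\psi(\alpha,\cdot)$ is differentiable in $\lambda$, Theorem \ref{theor2} forces $\psi'_\lambda(\alpha,\lambda)\ge0$ for $\alpha\in(0,1)$ (strict increase) and $\psi'_\lambda(\alpha,\lambda)\le0$ for $\alpha>1$ (strict decrease), whence $R(\alpha,\lambda)\ge0$ in the first case and $R(\alpha,\lambda)\le0$ in the second.

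Next I would rewrite $R(\alpha,\lambda)$ in the form appearing in the statement. Splitting the defining series gives
\[
R(\alpha,\lambda)=\sum_{k=0}^{\infty}k\,\frac{\lambda^{\alpha k-1}}{(k!)^{\alpha}}-\lambda\sum_{k=0}^{\infty}\frac{\lambda^{\alpha k-1}}{(k!)^{\alpha}}.
\]
In the first sum the $k=0$ term vanishes, and for $k\ge1$ the identity $(k!)^{\alpha}=k^{\alpha}\bigl((k-1)!\bigr)^{\alpha}$ yields $k/(k!)^{\alpha}=k^{1-\alpha}/\bigl((k-1)!\bigr)^{\alpha}$; in the second sum $\lambda\cdot\lambda^{\alpha k-1}=\lambda^{\alpha k}$. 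Therefore
\[
R(\alpha,\lambda)=\sum_{k=1}^{\infty}\frac{\lambda^{\alpha k-1}k^{1-\alpha}}{\bigl((k-1)!\bigr)^{\alpha}}-\sum_{k=0}^{\infty}\frac{\lambda^{\alpha k}}{(k!)^{\alpha}},
\]
so that $R(\alpha,\lambda)\ge0$ is exactly inequality \eqref{equi-1} and $R(\alpha,\lambda)\le0$ is exactly inequality \eqref{equi-2}, which completes the argument.

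Since the monotonicity of $\psi$ was the genuinely difficult ingredient and is already available from Theorem \ref{theor2}, there is no serious obstacle here. The only points requiring care are the justification of the term-by-term differentiation (asserted in the text preceding the lemma and readily confirmed from the fast decay of the Poisson tails) and the elementary factorial manipulation that identifies the first series with the left-hand sides of \eqref{equi-1} and \eqref{equi-2}. I would also verify the absolute convergence of the two separate series so that the splitting of $R(\alpha,\lambda)$ into a difference of two sums is legitimate for every $\alpha,\lambda>0$.
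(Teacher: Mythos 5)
Your proposal is correct and follows essentially the same route as the paper: both deduce the sign of $R(\alpha,\lambda)$ from the monotonicity of $\psi(\alpha,\cdot)$ established in Theorem \ref{theor2}, using that $\psi'_\lambda(\alpha,\lambda)=\alpha e^{-\alpha\lambda}R(\alpha,\lambda)$ with a strictly positive prefactor. Your explicit factorial manipulation identifying $R(\alpha,\lambda)$ with the difference of the two series in \eqref{equi-1}--\eqref{equi-2} is left implicit in the paper's text but is exactly the intended reduction.
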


\begin{proof}
Both inequalities follow immediately from the fact that $R(\alpha,\lambda)$ and $\psi'_\lambda(\alpha,\lambda)$ are of the same sign, therefore it follows from Theorem \ref{theor2} that $R(\alpha,\lambda)\ge0$ for all $\lambda>0$ and $\alpha \in (0,1)$ and $R(\alpha,\lambda)\le0$ for all $\lambda>0$ and $\alpha>1$.
\end{proof}

\begin{remark}
 The nontrivial character of inequalities \eqref{equi-1} and \eqref{equi-2} is implied by the fact that we can not compare the respective series term by term, and the relation between $k$th terms depends on whether the condition $\lambda>k$ or the inverse one is satisfied. The similar  situation was with  $\psi(\alpha,\lambda)$, but now, having  Theorem \ref{theor2} in hand, we do not need to analyze the series in more detail in order to compare them.
\end{remark}

\section{Graphical support of results}

In this section we present several plots and surfaces illustrating the behaviour of R\'{e}nyi  entropy as the function of parameters $\alpha$ and $\lambda.$ Obviously, all of them confirm our theoretical results. First, we demonstrate the behaviour of the function $\psi(\alpha,\lambda)$ from \eqref{psi}.

\begin{figure}[htbp]
  \centering
  \includegraphics[width=\columnwidth]{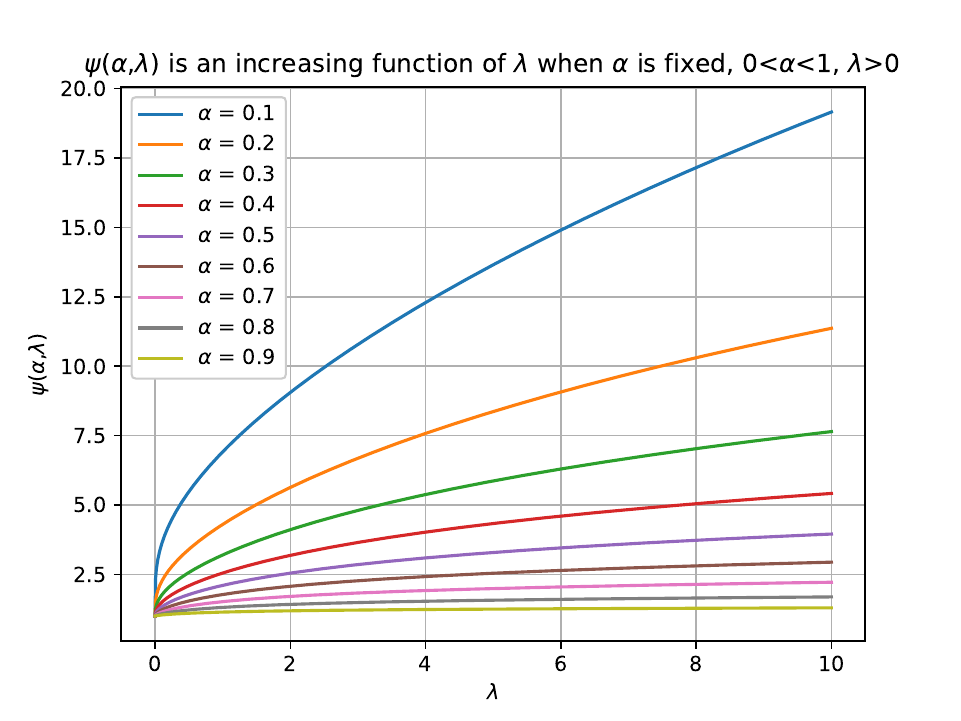}
  \caption{$\psi(\alpha_0,\lambda)$ is an increasing function of $\lambda$ when $\alpha_0$ is fixed, $0<\alpha<1$, $\lambda>0$.}\label{pic1}
\end{figure}

\begin{figure}[htbp]
  \centering
  \includegraphics[width=\columnwidth]{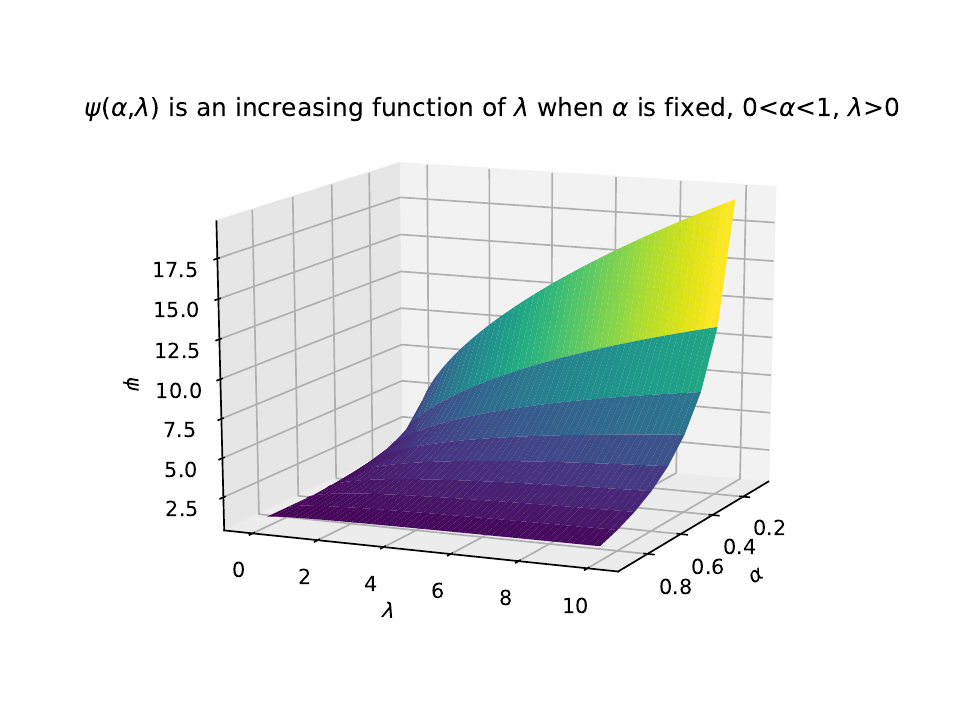}
  \caption{$\psi(\alpha,\lambda)$ is an increasing function of $\lambda$, $0<\alpha<1$, $\lambda>0$.}\label{pic2}
\end{figure}

\begin{figure}[htbp]
  \centering
  \includegraphics[width=\columnwidth]{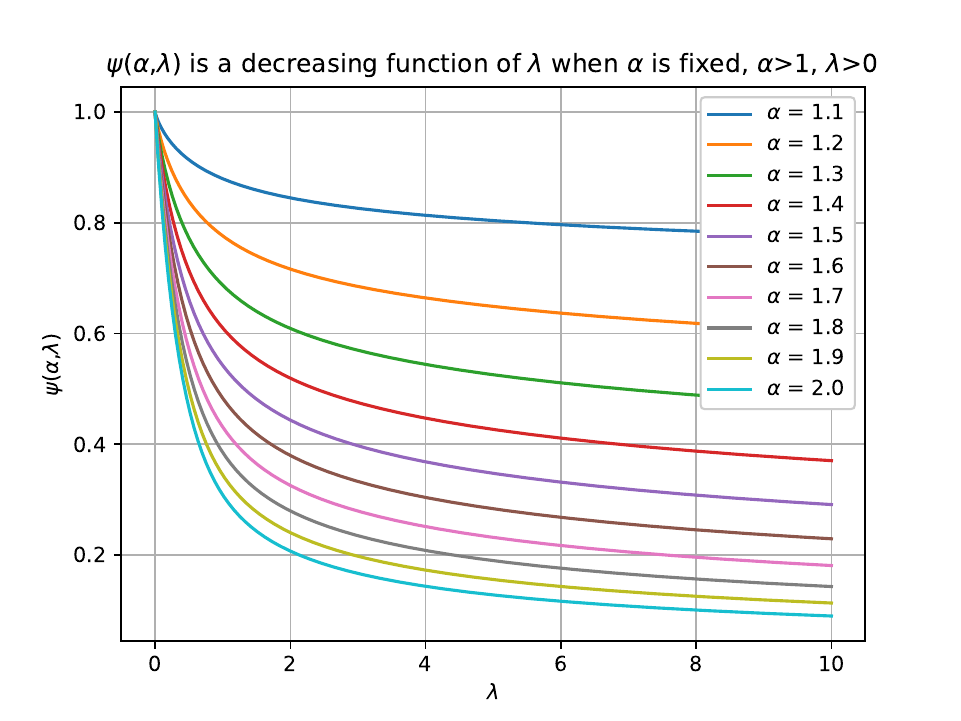}
  \caption{$\psi(\alpha_0,\lambda)$ is an decreasing function of $\lambda$ when $\alpha_0$ is fixed, $\alpha>1$, $\lambda>0$.}\label{pic3}
\end{figure}

\begin{figure}[htbp]
  \centering
  \includegraphics[width=\columnwidth]{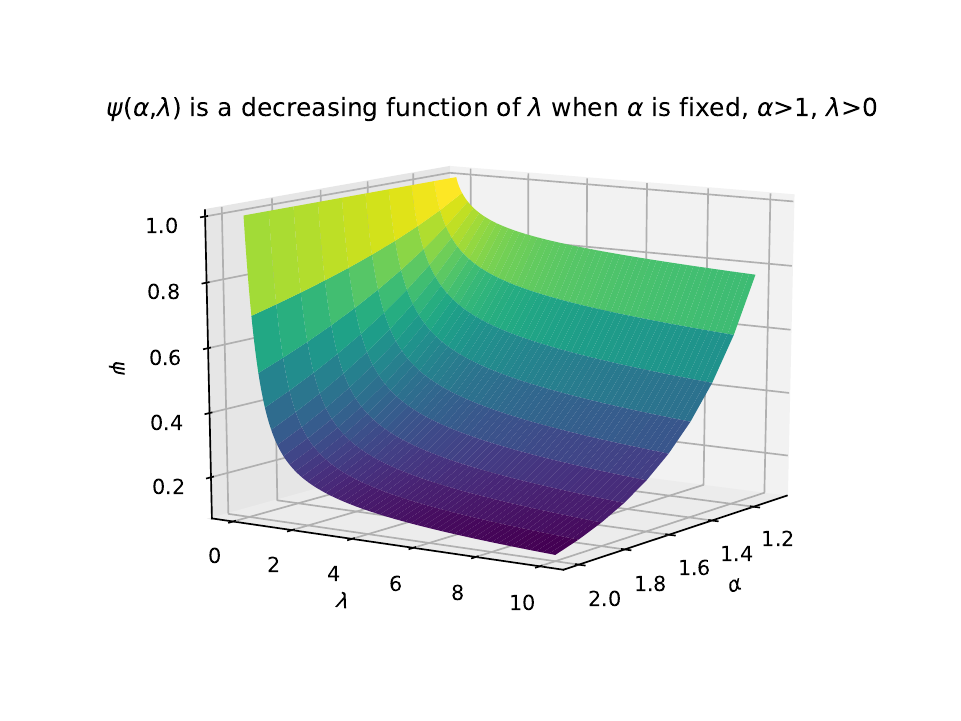}
  \caption{$\psi(\alpha,\lambda)$ is an decreasing function of $\lambda$, $\alpha>1$, $\lambda>0$.}\label{pic4}
\end{figure}

We see from Fig.~\ref{pic1}  that for  fixed $\alpha\in(0, 1) $ (the values of $\alpha$ are chosen from $0.1$ to $0.9$ with interval $0.1$) function $\psi(\alpha,\lambda)$ strictly increases in $\lambda\in(0, +\infty)$. The same is confirmed by the surface on Fig.~\ref{pic2}.

Respectively, we see from Fig.~\ref{pic3} that for  fixed $\alpha>1$ (the values of $\alpha$ are chosen from $1.1$ to $2.0$ with interval $0.1$) function $\psi(\alpha,\lambda)$ strictly decreases  in   $\lambda\in(0, +\infty)$. The same is confirmed by the surface on Fig.~\ref{pic4}.

Now, let us illustrate Lemma \ref{lem2}.

\begin{figure}[htbp]
  \centering
  \includegraphics[width=\columnwidth]{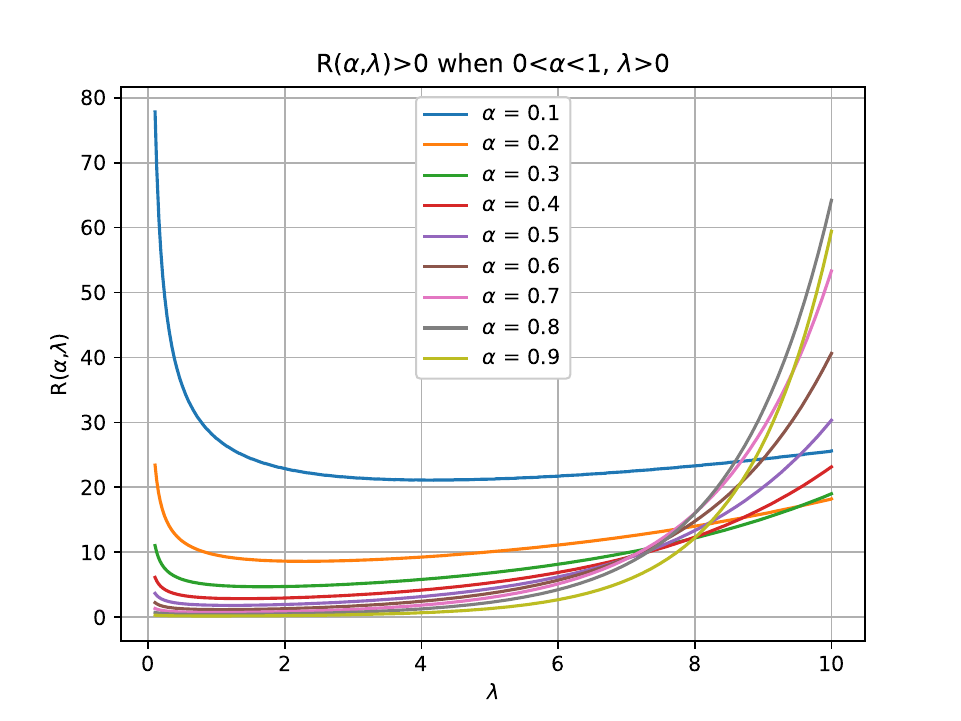}
  \caption{$R(\alpha_0,\lambda)>0$ when $0<\alpha_0<1$ is fixed, $\lambda>0$.}\label{pic5}
\end{figure}

\begin{figure}[htbp]
  \centering
  \includegraphics[width=\columnwidth]{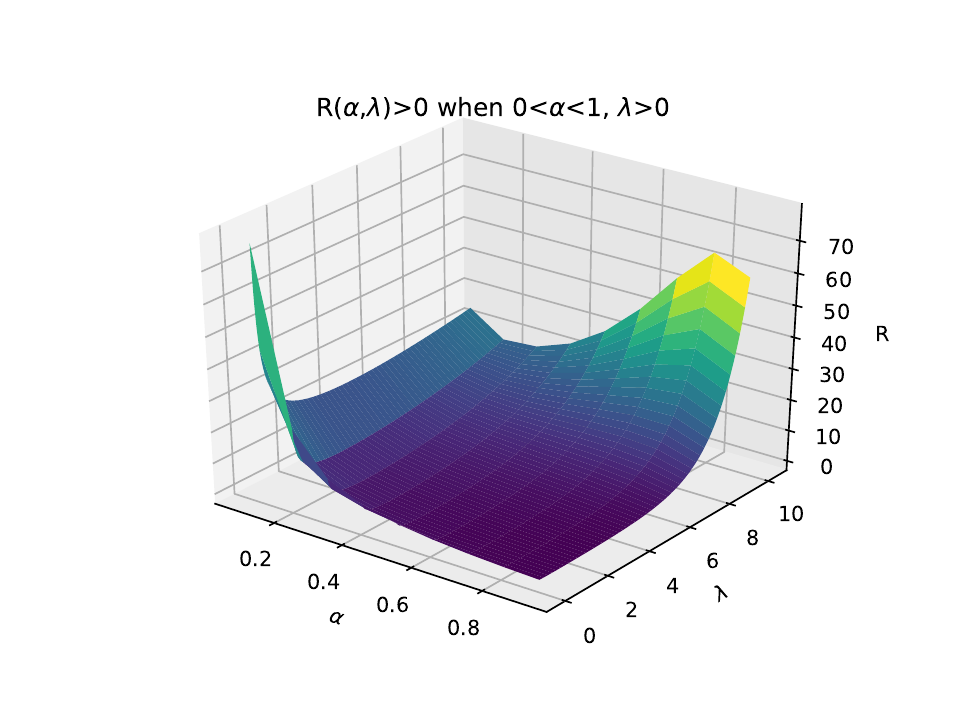}
  \caption{$R(\alpha,\lambda)>0$ when $0<\alpha<1$, $\lambda>0$.}\label{pic6}
\end{figure}

\begin{figure}[htbp]
  \centering
  \includegraphics[width=\columnwidth]{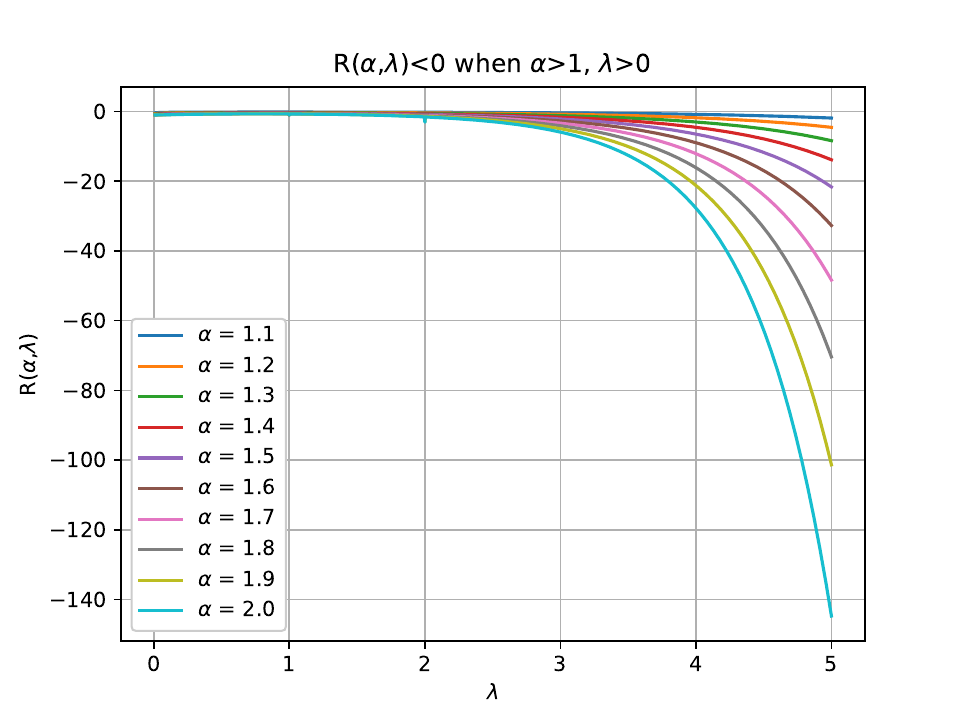}
  \caption{$R(\alpha_0,\lambda)<0$ when $\alpha_0>1$ is fixed, $\lambda>0$.}\label{pic7}
\end{figure}

\begin{figure}[htbp]
  \centering
  \includegraphics[width=\columnwidth]{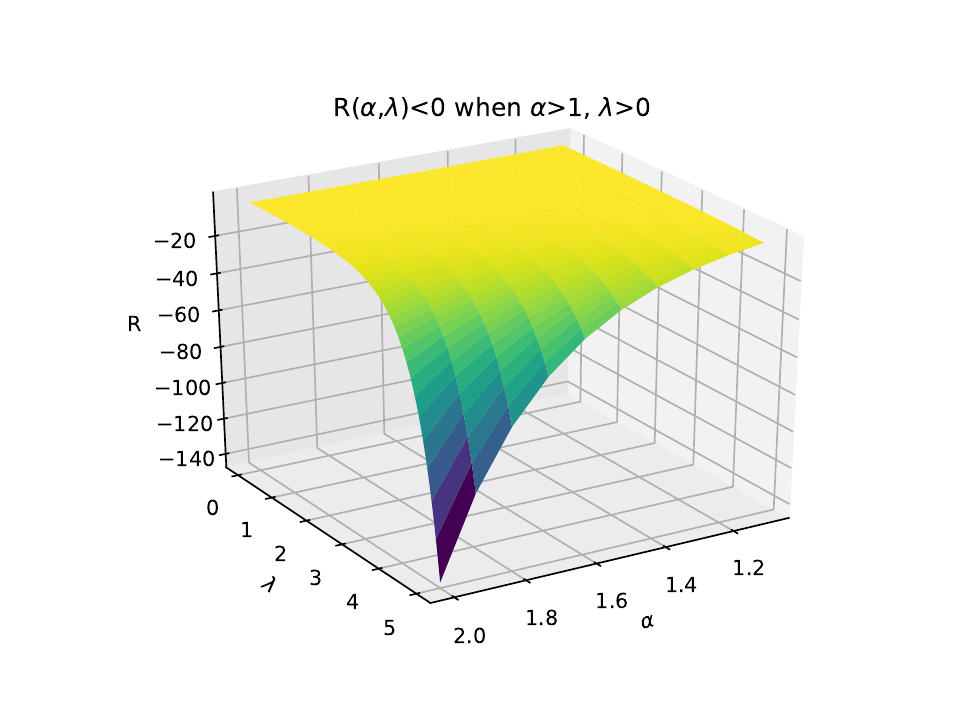}
  \caption{$R(\alpha,\lambda)<0$ when $\alpha>1$, $\lambda>0$.}\label{pic8}
\end{figure}

We see from Fig.~\ref{pic5}  that for $\lambda>0$ and $\alpha\in(0, 1) $ (the values of $\alpha$ are chosen from $0.1$ to $0.9$ with interval $0.1$) $R(\alpha,\lambda)>0$. The same is confirmed by the surface on Fig.~\ref{pic6}.

Respectively, we see from Fig.~\ref{pic7} that for $\lambda>0$ and $\alpha>0$ (the values of $\alpha$ are chosen from $1.1$ to $2.0$ with interval $0.1$) $R(\alpha,\lambda)<0$. The same is confirmed by the surface on Fig.~\ref{pic8}.

\appendix

\section{Appendix}

\begin{lemma}\label{lem1}
 The following relation holds:
\[
\liminf_{\lambda \to \infty} e^{-\lambda} (\log \lambda)^{-1} \sum_{k=1}^{\infty} \frac{\lambda^k \log(k+1)}{k!} \geq 1.
\]
\end{lemma}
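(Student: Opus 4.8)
The plan is to recognise the series as an expectation and then exploit the concentration of the Poisson distribution about its mean. Write $\xi_\lambda$ for a Poisson random variable with parameter $\lambda$. Since the $k=0$ term of the series vanishes ($\log 1=0$), we have
\[
e^{-\lambda}\sum_{k=1}^{\infty}\frac{\lambda^k\log(k+1)}{k!}=\sum_{k=0}^{\infty}P\{\xi_\lambda=k\}\log(k+1)=\mathbb{E}\bigl[\log(\xi_\lambda+1)\bigr].
\]
Thus the claim reduces to showing $\liminf_{\lambda\to\infty}\mathbb{E}[\log(\xi_\lambda+1)]/\log\lambda\ge1$. The guiding heuristic is that $\xi_\lambda$ concentrates around its mean $\lambda$, so that $\log(\xi_\lambda+1)\approx\log\lambda$ with overwhelming probability.

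To make this rigorous, I would fix $\epsilon\in(0,1)$ and truncate the expectation to the concentration event $A_\lambda=\{\xi_\lambda\ge(1-\epsilon)\lambda\}$. Because $\xi_\lambda\ge0$, the integrand $\log(\xi_\lambda+1)$ is non-negative everywhere, so discarding its contribution on the complement $A_\lambda^c$ only decreases the expectation. On $A_\lambda$, once $\lambda$ is large enough that $(1-\epsilon)\lambda>1$, we have $\log(\xi_\lambda+1)\ge\log\bigl((1-\epsilon)\lambda\bigr)>0$, and hence
\[
\mathbb{E}[\log(\xi_\lambda+1)]\ge\log\bigl((1-\epsilon)\lambda\bigr)\,P(A_\lambda).
\]

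The remaining ingredient is the one-sided concentration estimate $P(A_\lambda)\to1$, which follows from Chebyshev's inequality together with $\operatorname{Var}(\xi_\lambda)=\lambda$:
\[
P(A_\lambda^c)=P\{\xi_\lambda<(1-\epsilon)\lambda\}\le P\{|\xi_\lambda-\lambda|\ge\epsilon\lambda\}\le\frac{\lambda}{\epsilon^2\lambda^2}=\frac{1}{\epsilon^2\lambda},
\]
which tends to $0$ as $\lambda\to\infty$. Dividing the previous display by $\log\lambda$ and taking the lower limit then gives
\[
\liminf_{\lambda\to\infty}\frac{\mathbb{E}[\log(\xi_\lambda+1)]}{\log\lambda}\ge\liminf_{\lambda\to\infty}\frac{\log\bigl((1-\epsilon)\lambda\bigr)}{\log\lambda}\,P(A_\lambda)=1,
\]
since $\log\bigl((1-\epsilon)\lambda\bigr)/\log\lambda=1+\log(1-\epsilon)/\log\lambda\to1$ and $P(A_\lambda)\to1$.

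I expect no deep obstacle in this argument; the only points requiring genuine care are in the bookkeeping of the truncation step. One must verify that $\log\bigl((1-\epsilon)\lambda\bigr)\ge0$ before bounding the restricted expectation from below by it, and one must use the global non-negativity of $\log(\xi_\lambda+1)$ to justify dropping the contribution of $A_\lambda^c$ rather than controlling its (negative) lower tail directly. It is worth noting that the final bound already equals $1$ for each fixed $\epsilon\in(0,1)$, so there is in fact no need to let $\epsilon\to0$.
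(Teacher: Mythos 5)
Your proof is correct, and it takes a genuinely cleaner route than the paper's. Both arguments rest on the same underlying idea --- the Poisson mass concentrates near $\lambda$, so the weight $\log(k+1)$ is effectively $\log\lambda$ --- but they execute it differently. The paper works directly with the series: it splits the sum at $k=\left[\frac{\lambda}{2}\right]$ (in your language, it fixes $\epsilon=\frac12$), then shows the head $S_1(\lambda)$ is negligible and the tail mass $e^{-\lambda}\sum_{k>\left[\lambda/2\right]}\lambda^k/k!$ tends to $1$, both via explicit two-sided Stirling bounds on the factorial, bounding the head by its largest term times the number of terms and obtaining geometric decay of order $\bigl(\tfrac{2.1}{e}\bigr)^{\left[\lambda/2\right]}$. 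You instead recast the series as $\mathbb{E}\bigl[\log(\xi_\lambda+1)\bigr]$, drop the lower-tail contribution using the global non-negativity of the integrand, and control $P\{\xi_\lambda<(1-\epsilon)\lambda\}$ by Chebyshev's inequality. This replaces all the Stirling bookkeeping with the single bound $1/(\epsilon^2\lambda)$; since only $P(A_\lambda)\to1$ is needed, the polynomial rate suffices, and your closing observation that the bound is already $1$ for each fixed $\epsilon$ is accurate, so no limit in $\epsilon$ is required. What the paper's heavier machinery buys is an explicit exponential rate for the discarded head, which your argument does not provide but also does not need; conversely, your version is shorter, more elementary in its tools, and makes the probabilistic content of the lemma transparent. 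All the delicate points (non-negativity before truncation, $(1-\epsilon)\lambda>1$ before taking the logarithm, the inclusion $\{\xi_\lambda<(1-\epsilon)\lambda\}\subseteq\{|\xi_\lambda-\lambda|\ge\epsilon\lambda\}$) are handled correctly.
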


\begin{proof} 
It is sufficient to consider $\lambda>4.$
Let us rewrite  the sum under the sign of a limit as follows:
\begin{align*}
S(\lambda):&=\sum_{k=1}^{\infty} \frac{\lambda^k \log(k+1)}{k!}\\ \
 &=\sum_{k=1}^{ \left[\frac{\lambda}{2}\right]} \frac{\lambda^k \log(k+1)}{k!}~
 +\sum_{k\geq[\frac{\lambda}{2}]+1} \frac{\lambda^k \log(k+1)}{k!} \\ \ &=: S_1(\lambda)+S_2(\lambda),
\end{align*}
where $\left[\frac{\lambda}{2}\right]$ is the floor function of $\frac{\lambda}{2}$ (i.e. the greater integer less or equal to $\frac{\lambda}{2}$). Obviously, $$S_1(\lambda) \leq \log \left(\left[\frac{\lambda}{2}\right]+1 \right)\sum_{k=1}^{[\frac{\lambda}{2}]} \frac{\lambda^k}{k!}.$$ Moreover, $\frac{\lambda^k}{k!}$ increases in $k=1,\ldots,[\frac{\lambda}{2}].$ Therefore
\[
S_1(\lambda) \leq \log \left(\left[\frac{\lambda}{2}\right]+1 \right)
\frac{\lambda^{\left[\frac{\lambda}{2}\right]}}{\left[\frac{\lambda}{2}\right]!} \left[\frac{\lambda}{2}\right].
\]
Now, according to two-sided bounds of factorial, for any $n>1$
\[
\sqrt{2\pi n} \left(\frac{n}{e}\right)^n e^{\frac{1}{12n+1}} <n!<\sqrt{2\pi n} \left(\frac{n}{e}\right)^n e^{\frac{1}{12n}}.
\]
So, we can bound $e^{-\lambda} (\log \lambda)^{-1} S_1(\lambda)$ as $\lambda \to \infty,$ as follows:
\begin{align*}0<&e^{-\lambda} (\log \lambda)^{-1} S_1(\lambda) \leq \frac{e^{-\lambda} \lambda^{\left[\frac{\lambda}{2}\right]}}{\left[\frac{\lambda}{2}\right]!} \cdot \frac{\log \left(\left[\frac{\lambda}{2}\right]+1 \right)}{\log \lambda}\left[\frac{\lambda}{2}\right]\\
\leq& \frac{e^{-\lambda}\lambda^{\left[\frac{\lambda}{2}\right]} e^{\left[\frac{\lambda}{2}\right]} \left[\frac{\lambda}{2}\right]}{\sqrt{2\pi} \sqrt{\left[\frac{\lambda}{2}\right]} {\left[\frac{\lambda}{2}\right]}^{\left[\frac{\lambda}{2}\right]} e^{\frac{1}{12\left[\frac{\lambda}{2}\right]+1}}}.\end{align*}
Now shift the range of  the values of $\lambda$ under consideration to $\lambda>42$ and note that   for such $\lambda$ we have the bounds
$\left[\frac{\lambda}{2}\right] \geq \frac{\lambda}{2}-1 \geq \frac{\lambda}{2,1}.$
Therefore for $\lambda>42$
\begin{equation}\label{eq1}
e^{-\lambda} (\log \lambda)^{-1} S_1(\lambda)\leq \frac{e^{\left[\frac{\lambda}{2}\right]-\lambda} (2,1)^{\left[\frac{\lambda}{2}\right]} \left[\frac{\lambda}{2}\right]}{\sqrt{2 \pi} \sqrt{\left[\frac{\lambda}{2}\right]} e^{\frac{1}{12\left[\frac{\lambda}{2}\right]+1}}} <\frac{\left( \frac{2,1}{e}\right)^{\left[\frac{\lambda}{2}\right]} \sqrt{\left[\frac{\lambda}{2}\right]}}{\sqrt{2 \pi}e^{\frac{1}{12\left[\frac{\lambda}{2}\right]+1}}}\xrightarrow[\lambda \to \infty]{}  0.
\end{equation}
Now consider
\[
e^{-\lambda} (\log \lambda)^{-1} S_2(\lambda)\geq e^{-\lambda} \frac{\sum_{k=\left[\frac{\lambda}{2}\right]+1}^{\infty} \frac{\lambda^k}{k!} \log \left(\left[\frac{\lambda}{2}\right]+1 \right)}{\log \lambda}.
\]
Obviously, $\frac{\log \left(\left[\frac{\lambda}{2}\right]+1\right)}{\log \lambda}\to 1$ as $\lambda\to\infty.$ Therefore
\begin{equation}\label{eq2}
\liminf_{\lambda\to\infty} e^{-\lambda} (\log \lambda)^{-1} S_2(\lambda)\geq \liminf_{\lambda\to\infty} \frac{\sum_{k=\left[\frac{\lambda}{2}\right]+1}^{\infty} \frac{\lambda^k}{k!}}{e^{\lambda}}.
\end{equation}
Obviously, $e^{\lambda}=\sum_{k=0}^{\infty} \frac{\lambda^k}{k!},$ while it follows immediately, that similarly to previous calculations in \eqref{eq1}
\begin{equation}\label{eq3}
0 \leq \liminf_{\lambda\to\infty} \frac{\sum_{k=1}^{\left[\frac{\lambda}{2}\right]} \frac{\lambda^k}{k!}}{e^{\lambda}}\leq \limsup_{\lambda\to\infty} \frac{\frac{\lambda^{\left[\frac{\lambda}{2}\right]}}{\left[\frac{\lambda}{2}\right]!}\left[\frac{\lambda}{2}\right]}{e^{\lambda}} \leq \limsup_{\lambda\to\infty} \frac{\left(\frac{2,1}{e} \right)^{\left[\frac{\lambda}{2}\right]} \sqrt{\left[\frac{\lambda}{2}\right]}}{\sqrt{2\pi} e^{\frac{1}{12\left[\frac{\lambda}{2}\right]+1}}}=0.
\end{equation}

Relation (\ref{eq3}) implies that the following limit exists:
\begin{equation}\label{eq4}
\lim_{\lambda\to\infty} \frac{\sum_{k=\left[\frac{\lambda}{2}\right]+1}^{\infty} \frac{\lambda^k}{k!}}{e^{\lambda}}=
\lim_{\lambda\to\infty} \frac{e^{\lambda}-\sum_{k=0}^{\left[\frac{\lambda}{2}\right]} \frac{\lambda^k}{k!}}{e^{\lambda}}=1.
\end{equation}
The proof immediately follows from (\ref{eq1}), (\ref{eq2}) and (\ref{eq4}).

\end{proof}

\begin{lemma}[Karamata's inequality, \cite{kdlm}, \cite{kara}, \cite{moa}]\label{lemma A2} 
Let $f$ be a convex
function on an interval $I\subset\mathbb{R}$ and $a_1,\ldots,a_n,b_1,\ldots,b_n$ be numbers in $I$ such that
$(a_1,\ldots,a_n)$ majorizes $(b_1,\ldots,b_n),$ i.e. the following conditions are fulfilled:\emph{
\begin{enumerate}[label=\arabic*)]
 \item {$a_1\ge a_2\ge\ldots\ge a_n,$ $b_1\ge b_2\ge\ldots\ge b_n;$ }
 \item {$\sum\limits_{k=1}^i a_k\ge \sum\limits_{k=1}^i b_k$ for every $1\le i\le n-1;$}
 \item {$\sum\limits_{k=1}^n a_k=\sum\limits_{k=1}^n b_k.$ }
\end{enumerate}}
Then the inequality $\sum\limits_{k=1}^n f(a_k)\ge \sum\limits_{k=1}^n f(b_k)$ holds.

If $f$ is a concave function on $I$, the inequality is reversed.
\end{lemma}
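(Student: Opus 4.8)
The plan is to prove the convex case directly; the concave case then follows immediately by applying the convex result to $-f$, which is convex and reverses every inequality in the conclusion. Throughout I write $A_i=\sum_{k=1}^i a_k$ and $B_i=\sum_{k=1}^i b_k$, with the conventions $A_0=B_0=0$, and I set $D_i=A_i-B_i$. The majorization hypotheses then read $D_i\ge0$ for $1\le i\le n-1$ (condition 2) and $D_0=D_n=0$ (condition 3 together with the convention), while condition 1 supplies the monotonicity of both sequences that I will exploit below.

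The core device is the supporting-line characterisation of convexity. For each index $k$ I would attach to the point $b_k$ a subgradient $t_k$ of $f$, i.e. a slope satisfying $f(x)\ge f(b_k)+t_k(x-b_k)$ for all $x\in I$; such a slope exists because $f$ is convex, and one may take concretely $t_k=f'_+(b_k)$, the right derivative at $b_k$. The decisive feature is that this family is automatically non-increasing in $k$: for a convex function the right derivative $f'_+$ is a non-decreasing function of its argument, so from $b_1\ge b_2\ge\cdots\ge b_n$ one obtains $t_1\ge t_2\ge\cdots\ge t_n$ (with ties among the $b_k$ producing equal slopes, which causes no difficulty).

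With these slopes chosen, evaluating the supporting line at $x=a_k$ gives $f(a_k)\ge f(b_k)+t_k(a_k-b_k)$, and summing over $k$ reduces the whole statement to proving that $\sum_{k=1}^n t_k(a_k-b_k)\ge0$. Here I would apply Abel summation. Writing $a_k-b_k=D_k-D_{k-1}$ and rearranging, one finds
\[
\sum_{k=1}^n t_k(a_k-b_k)=t_nD_n-t_1D_0+\sum_{k=1}^{n-1}(t_k-t_{k+1})D_k=\sum_{k=1}^{n-1}(t_k-t_{k+1})D_k,
\]
where the two boundary terms vanish because $D_0=D_n=0$. Every summand that survives is a product of the non-negative factor $t_k-t_{k+1}\ge0$ (monotonicity of the slopes) with the non-negative factor $D_k\ge0$ (condition 2), so the sum is $\ge0$, which is precisely what is required.

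The step that I expect to demand the most care is the construction of a non-increasing family of slopes $t_k$ when $f$ is merely convex rather than differentiable: there one must invoke the monotonicity of the one-sided derivatives of a convex function and verify that $f'_+(b_k)$ is genuinely a subgradient (the supporting-line inequality holds on both sides of $b_k$). An equivalent route, should one wish to avoid subgradients altogether, is to set $c_k$ equal to the secant slope $\frac{f(a_k)-f(b_k)}{a_k-b_k}$ whenever $a_k\ne b_k$ and to a subgradient at $a_k$ otherwise; the monotone-difference-quotient property of convex functions then yields $c_1\ge\cdots\ge c_n$, and the identical Abel summation applies, now with the exact identity $f(a_k)-f(b_k)=c_k(a_k-b_k)$ replacing the supporting-line inequality.
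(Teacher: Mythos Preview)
Your argument is correct and is in fact the standard proof of Karamata's inequality: attach subgradients $t_k$ at the $b_k$, use the supporting-line inequality to reduce to $\sum_k t_k(a_k-b_k)\ge 0$, and then Abel-summate using $D_0=D_n=0$ and the monotonicity $t_k\ge t_{k+1}$ inherited from $b_k\ge b_{k+1}$. The only delicate point you already flag yourself, namely the existence and monotonicity of a subgradient family for a not-necessarily-differentiable convex $f$; taking $t_k=f'_+(b_k)$ (or $f'_-(b_k)$ at a right endpoint of $I$) handles this.

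As for the comparison: the paper does not prove this lemma at all. It is recorded in the Appendix as a cited result (with references to Karamata's original paper and standard monographs), so there is no ``paper's own proof'' to compare against. Your write-up therefore goes beyond what the paper provides, and does so along the classical route found in those references.

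One small remark on your alternative via secant slopes $c_k=\dfrac{f(a_k)-f(b_k)}{a_k-b_k}$: the monotonicity $c_1\ge\cdots\ge c_n$ does hold, but it is worth spelling out why, since $a_k$ and $b_k$ need not be ordered the same way for different $k$. The point is that for convex $f$ the two-variable difference quotient $s(x,y)=\dfrac{f(x)-f(y)}{x-y}$ is symmetric and non-decreasing in each argument separately, so $s(a_k,b_k)\ge s(a_{k+1},b_k)\ge s(a_{k+1},b_{k+1})$; degenerate cases where two arguments coincide are absorbed by passing to a subgradient, exactly as you indicate.
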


\paragraph{Acknowledgment.} Yuliya Mishura was supported by   The Swedish Foundation for Strategic Research, grant
Nr. UKR22-0017 and the ToppForsk project nr.
274410 of the Research Council of Norway with title STORM: Stochastics
for Time-Space Risk Models.  We would like to thank Prof. Igor Shevchuk for  the attention to our work and useful advices. 
 
\bibliographystyle{plain}
\bibliography{Poisson}

\end{document}